\long\def\beginpgfgraphicnamed#1#2\endpgfgraphicnamed{\includegraphics{#1}}
\theoremstyle{plain}
\newtheorem{theorem}{Theorem}[section]
\newtheorem{lemma}[theorem]{Lemma}
\theoremstyle{definition}
\newtheorem{example}[theorem]{Example}
\newcommand{\qedsymb}{\hfill{\rule{2mm}{2mm}}}
\renewenvironment{proof}{\begin{trivlist} \item[\hspace{\labelsep}{\bf \noindent Proof.\/}] }{\qedsymb\end{trivlist}}%
\newcommand{\bs}[1]{\boldsymbol{#1}}
\newcommand{\expar}[2]{\mathrm{E}_{ #1 } [ #2 ]}
\newcommand{\bbR}{\mathbb{R}}
\newcommand{\Z}{{\mathbb{Z}}}
\newcommand{\eps}{\varepsilon}
\newcommand{\R}{{\cal{R}}}
\newcommand{\C}{{\cal{C}}}
\newcommand{\polylog}{\mathrm{polylog}}
\newcommand{\opt}{\mbox{\textsc{opt}}}
\numberwithin{equation}{section}%
\newcommand{\Xomit}[1]{ }
\begin{document}

\title{Improved Approximation Guarantees for Weighted Matching \\
in the Semi-Streaming Model}
\author{%
Leah Epstein\thanks{Department of Mathematics, University of Haifa, 31905 Haifa, Israel.
Email: {\tt lea@math.haifa.ac.il}.}%
\and Asaf Levin\thanks{Chaya fellow. Faculty of Industrial Engineering and Management, The Technion,
32000 Haifa, Israel. Email: {\tt levinas@ie.technion.ac.il}.}%
\and Juli\'{a}n Mestre\thanks{ Max-Planck-Institut f\"ur Informatik, 66123 Saarbr\"ucken, Germany.
Email: {\tt jmestre@mpi-inf.mpg.de}. Research supported by an Alexander von Humboldt Fellowship.}%
\and Danny Segev\thanks{Operations Research Center, Massachusetts Institute of Technology, Cambridge 02139, MA, USA. Email: {\tt segevd@mit.edu}.}}
\date{}
\maketitle

\begin{abstract}
  We study the maximum weight matching problem in the semi-streaming
  model, and improve on the currently best one-pass algorithm due to
  Zelke (Proc.\ STACS~'08, pages 669--680) by devising a
  deterministic approach whose performance guarantee is $4.91 +
  \eps$. In addition, we study {\em preemptive} online algorithms, a
  sub-class of one-pass algorithms where we are only allowed to
  maintain a feasible matching in memory at any point in time. All
  known results prior to Zelke's belong to this sub-class. We provide
  a lower bound of $4.967$ on the competitive ratio of any such
  deterministic algorithm, and hence show that future improvements
  will have to store in memory a set of edges which is not necessarily
  a feasible matching.
\end{abstract}

\section{Introduction} \label{sec:intro}

The computational task of detecting maximum weight matchings is one of
the most fundamental problems in discrete optimization, attracting
plenty of attention from the operations research, computer science,
and mathematics communities. (For a wealth of references on matching
problems see~\cite{Schrijver03}.) In such settings, we are given an
undirected graph $G = (V,E)$ whose edges are associated with
non-negative weights specified by $w : E \to \bbR_+$. A set of edges
$M \subseteq E$ is a {\em matching} if no two of the edges share a
common vertex, that is, the degree of any vertex in $(V,M)$ is at most
$1$. The weight $w(M)$ of a matching $M$ is defined as the combined
weight of its edges, i.e., $\sum_{e \in M} w(e)$. The objective is to
compute a matching of maximum weight.  We study this problem in two
related computational models: the {\it semi-streaming} model and the
{\it preemptive online} model.

\paragraph{The semi-streaming model.} Even though these settings
appear to be rather simple as first glance, it is worth noting that
matching problems have an abundance of flavors, usually depending on
how the input is specified. In this paper, we investigate
weighted matchings in the {\em semi-streaming} model, was first
suggested by Muthukrishnan~\cite{Muthukrishnan05}. Specifically, a
{\em graph stream} is a sequence $e_{i_1}, e_{i_2}, \ldots$ of
distinct edges, where $e_{i_1}, e_{i_2}, \ldots$ is an arbitrary
permutation of $E$. When an algorithm is processing the stream, edges
are revealed sequentially, one at a time. Letting $n = |V|$ and $m =
|E|$, efficiency in this model is measured by the space $S(n,m)$ a
graph algorithm uses, the time $T(n,m)$ it requires to process each
edge, and the number of passes $P(n,m)$ it makes over the input
stream. The main restriction is that the space $S(n,m)$ is limited to
$O( n \cdot \polylog(n) )$ bits of memory.  We refer the reader to a
number of recent papers~\cite{Muthukrishnan05, FeigenbaumKMSZ05,
  FeigenbaumKMSZ08, ElkinZ06, McGregor05} and to the references
therein for a detailed literature review.

\paragraph{Online graph problems.} Unlike the semi-streaming model, in
online problems the size of the underlying graph is not known in
advance. The online matching problem has previously been modeled as
follows. Edges are presented one by one to the algorithm, along with
their weight. Once an edge is presented, we must make an irrevocable
decision, whether to accept it or not. An edge may be accepted only if
its addition to the set of previously accepted edges forms a feasible
matching. In other words, an algorithm must keep a matching at all
times, and its final output consists of all edges which were ever
accepted. In this model, it is easy to verify that the competitive
ratio of any (deterministic or randomized) algorithm exceeds any
function of the number of vertices, meaning that no competitive
algorithm exists. However, if all weights are equal, a greedy approach
which accepts an edge whenever possible, has a competitive ratio of
$2$, which is best possible for deterministic algorithms \cite{KVV90}.

Similarly to other online settings (such as call control problems
\cite{GG+}), a preemptive model can be defined, allowing us to remove
a previously accepted edge from the current matching at any point in
time; this event is called {\em preemption}. Nevertheless, an edge
which was either rejected or preempted cannot be inserted to the
matching later on. We point out that other types of online matching
problems were studied as well
\cite{KVV90,KalP93,KMV94,BansalBGN07}.

\paragraph{Comparison between the models.} Both semi-streaming
algorithms and online algorithms perform a single pass over the
input. However, unlike semi-streaming algorithms, online algorithms
are allowed to concurrently utilize memory for two different
purposes. The first purpose is obviously to maintain the current
solution, which must always be a feasible matching, implying that the
memory size of this nature is bounded by the maximal size of a
matching. The second purpose is to keep track of arbitrary information
regarding the past, without any concrete bound on the size of memory
used. Therefore, in theory, online algorithms are allowed to use much
larger memory than is allowed in the semi-streaming model. Moreover, although
this possibility is rarely used, online algorithms may perform
exponential time computations whenever a new piece of input is
revealed. On the other hand, a semi-streaming algorithm may re-insert an edge the
current solution, even if it has been temporarily removed, as long as
this edge was kept in memory.  This extra power is not allowed for
online (preemptive) algorithms, making them inferior in this sense in
comparison to their semi-streaming counterparts.

\paragraph{Previous work.} Feigenbaum et al.~\cite{FeigenbaumKMSZ05}
were the first to study matching problems under similar
assumptions. Their main results in this context were a
semi-streaming algorithm that computes a $(3/2 -
\eps)$-approximation in $O( \log(1 / \eps) / \eps )$ passes for
maximum cardinality matching in bipartite graphs, as well as a
one-pass $6$-approximation for maximum weighted matching in
arbitrary graphs. Later on, McGregor~\cite{McGregor05} improved
on these findings, to obtain performance guarantees of $(1 +
\eps)$ and $(2 + \eps)$ for the maximum cardinality and maximum
weight versions, respectively, being able to handle arbitrary
graphs with only a constant number of passes (depending on $1 /
\eps$). In addition, McGregor~\cite{McGregor05} tweaked the
one-pass algorithm of Feigenbaum et al.\ into achieving a ratio
of $5.828$. Finally, Zelke~\cite{Zelke08} has recently attained
an improved approximation factor of $5.585$, which stands as the
currently best one-pass algorithm. Note that the $6$-approximation
algorithm in~\cite{FeigenbaumKMSZ05} and the
$5.828$-approximation algorithm in~\cite{McGregor05} are
preemptive online algorithms. On the other hand, the algorithm of
Zelke~\cite{Zelke08} uses the notion of shadow-edges, which may be
re-inserted into the matching, and hence it is not an online
algorithm.

\paragraph{Main result I.}
The first contribution of this paper is to improve on the
above-mentioned results, by devising a deterministic one-pass
algorithm in the semi-streaming model, whose performance guarantee
is $4.91 + \eps$. In a nutshell, our approach is based on
partitioning the edge set into $O( \log n )$ weight classes, and
computing a separate maximal matching for each such class in
online fashion, using $O( n \cdot \polylog(n) )$ memory bits
overall. The crux lies in proving that the union of these
matchings contains a single matching whose weight compares
favorably to the optimal one. The specifics of this algorithm are
presented in Section~\ref{sec:app}.

\paragraph{Main result II.} Our second contribution is motivated by
the relation between semi-streaming algorithms and {\em
preemptive} online algorithms, which must maintain a feasible
matching at any point in time. To our knowledge, there are
currently no lower bounds on the competitive ratio that can be
achieved by incorporating preemption. Thus, we also provide a
lower bound of $4.967$ on the performance guarantee of any such
deterministic algorithm. As a result, we show that improved one
pass algorithms for this problem must store more than just a
matching in memory. Further details are provided in
Section~\ref{sec:lower_bound}.

\section{The Semi-Streaming Algorithm}\label{sec:app}

This section is devoted to obtaining main result I, that is, an improved
one-pass algorithm for the weighted matching problem in the
semi-streaming model.  We begin by presenting a simple deterministic algorithm with a performance guarantee of $8$. We then show how to randomize its parameters, still within the semi-streaming framework, and obtain an expected approximation ratio of $4.9108$. Finally,
we de-randomize the algorithm by showing how to emulate the required randomness
using multiple copies (constant number) of the deterministic
algorithm, while paying an additional additive factor of at most $\eps$, for any fixed $\eps > 0$.

\subsection{A simple deterministic approach} \label{det:alg}

\paragraph{Preliminaries.} We maintain the maximum weight of any edge $w_{\max}$ seen so far in
the input stream. Clearly, the maximum weight matching of the edges seen
so far has weight in the interval $[w_{\max},{\frac n2 w_{\max}}]$.
Note that if we disregard all edges with weight at most $2\eps
w_{\max}\over n$, the weight of the maximum weight
matching in the resulting instance decreases by an additive term of
at most $\eps w_{\max} \leq \eps \opt$.

Our algorithm has a parameter $\gamma >1$, and a value $\phi >0$.  We
define weight classes of edges in the following way.  For every $i\in
\Z$, we let the class $W_i$ be the collection of edges whose weight is in the interval $[\phi \gamma^{i},\phi \gamma^{i+1})$. We note that by our initial assumption, the weight of each edge
is in the interval $[{2\eps w_{\max}\over n},w_{\max}]$, and we say
that a weight class $W_i$ is {\it under consideration} if its
weight interval $[\phi \gamma^{i},\phi \gamma^{i+1})$ intersects
$[{2\eps w_{\max}\over n},w_{\max}]$. The number of
classes which are under consideration at any point in time is
$O(\log_{\gamma} (\frac n{\eps}))$.

\paragraph{The algorithm.} Our algorithm simply
maintains the list of classes under consideration and maintains a
maximal (unweighted) matching for each such class. In other words, when
the value of $w_{\max}$ changes, we delete from the memory some
of these matchings, corresponding to the classes which stop being
under consideration.  Note that to maintain a maximal matching in
a given subgraph, we only need to check if the two endpoints of
the new edge are not covered by existing edges of the matching.

To conclude, for every new edge $e\in E$ we proceed as follows.  We
first check if $w(e)$ is greater than the current value of $w_{\max}$.  If so, we update $w_{\max}$ and the list of weight classes under consideration accordingly. Then, we find the weight class of $w(e)$, and try to extend its corresponding matching, i.e., $e$ will be added to
this matching if it remains a matching after doing so.

Note that at each point the content of the memory is the value
$w_{\max}$ and a collection of $O(\log_{\gamma} (\frac n{\eps}))$
matchings, consisting of $ O( n \log_{\gamma} (\frac
n{\eps}))$ edges overall.  Therefore, our algorithm indeed falls in the
semi-streaming model.

At the conclusion of the input sequence, we need to return a single matching rather than a
collection of matchings.  To this end, we could compute a
maximum weighted matching of the edges in the current memory.
However, for the specific purposes of our analysis, we use the following faster
algorithm.  We sort the edges in memory in decreasing order of weight
classes, such that the edges in $W_i$ appear before those in $W_{i-1}$,
for every $i$.  Using this sorted list of edges, we apply a greedy
algorithm for selecting a maximal matching, in which the current edge is added
to this matching if it remains a matching after doing so.  Then,
the post-processing time needed is linear in the
size of the memory used, that is, $ O( n \log_{\gamma} (\frac
n{\eps}))$. This concludes the presentation of the algorithm and its
implementation as a semi-streaming algorithm.

\paragraph{Analysis.} For purposes of analysis, we round down the weight of
each edge $e$ such that $w(e)\in W_i$ to be $\phi \gamma^i$. This
way, we obtain {\em rounded} edge weights. Now fix an optimal solution $\opt$ and denote by $\opt$ its weight, and by $\opt'$ its rounded weight. The next claim immediately follows from the definition of $W_i$.

\begin{lemma}
$\opt \leq \gamma \opt'$.
\end{lemma}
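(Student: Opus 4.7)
The plan is to establish the inequality edge-by-edge and then sum over the optimal matching. Since the bound $\opt \leq \gamma \opt'$ is linear in the weights, it suffices to prove the corresponding bound $w(e) \leq \gamma w'(e)$ for every single edge $e$, where $w'(e)$ denotes the rounded weight of $e$.

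To do this, I would unpack the definition of the weight classes. For any edge $e$, there is a unique index $i \in \Z$ such that $e \in W_i$, meaning $w(e) \in [\phi\gamma^i, \phi\gamma^{i+1})$. By construction the rounded weight is $w'(e) = \phi\gamma^i$. The upper end of the interval then gives
\[
w(e) \;<\; \phi\gamma^{i+1} \;=\; \gamma\cdot\phi\gamma^i \;=\; \gamma\, w'(e),
\]
using $\gamma > 1$. Note this argument is oblivious to whether the class $W_i$ is currently ``under consideration''; every edge weight sits inside some class of the geometric partition.

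Summing this per-edge inequality over all $e$ in the fixed optimal matching immediately yields
\[
\opt \;=\; \sum_{e \in \opt} w(e) \;\leq\; \gamma \sum_{e \in \opt} w'(e) \;=\; \gamma\, \opt',
\]
which is the claimed bound. I do not expect any real obstacle here: the lemma is essentially a restatement of the rounding rule, and the only care needed is to make sure the class indexing is interpreted over all integers $i \in \Z$ rather than only the $O(\log_\gamma(n/\eps))$ classes kept in memory, so that every edge of $\opt$ (including those of very small weight) is covered by the per-edge estimate.
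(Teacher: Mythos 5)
Your proof is correct and is precisely the elementary verification the paper has in mind when it says the lemma ``immediately follows from the definition of $W_i$'': each edge's true weight lies in $[\phi\gamma^i,\phi\gamma^{i+1})$, so the rounded weight $\phi\gamma^i$ is within a factor $\gamma$, and summing over the edges of the fixed optimal matching gives the claim. Your side remark about indexing over all of $\Z$ rather than only the classes under consideration is also the right reading of the definitions.
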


As an intermediate step, we analyze an improved algorithm
which keeps all weight classes.  That is, for each $i$, we use
$M_i$ to denote the maximal matching of class $W_i$ at the end of the
input, and denote by $M$ the solution obtained by this algorithm,
if we would have applied it. Similarly, we denote by $\opt_i$ the
set of edges in $\opt$ which belong to $W_i$.  For every $i$,
we define the set of vertices $P_i$, associated with $W_i$, to be
the set of endpoints of edges in $M_i$ that are not associated
with higher weight classes:
\[ P_i = \{ \,u,v \ |\ (u,v) \in M_i\} \setminus (P_{i+1} \cup P_{i+2}
\cup \cdots ). \]
For a vertex $p \in P_i$, we define its associated weight to be $\phi
\gamma^i$.  For vertices which do not belong to any $P_i$, we
let their associated weight be zero. We next bound the total
associated weight of all the vertices.

\begin{lemma}
The total associated weight of all the vertices is at most ${2\gamma\over \gamma-1} \cdot w(M) $.
\end{lemma}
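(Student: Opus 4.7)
The plan is to define a charging scheme that distributes each vertex's associated weight onto the edges of $M$, and then bound the total charge received by any single edge of $M$.

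For each vertex $v$ with $f(v) = i$ (i.e., $v \in P_i$), let $(v, v') \in M_i$ be the unique edge of $M_i$ incident to $v$. The greedy post-processing considers edges of $M_i$ only after all classes $W_{i'}$ with $i' > i$ have been processed, and since $v$ belongs to no $M_{i'}$ with $i' > i$, $v$ is unmatched by the greedy at the moment $(v,v')$ is considered. I would then split into two cases. \emph{Case 1:} $(v,v')$ is added to $M$; then charge the weight $\phi\gamma^i$ of $v$ to the edge $(v,v') \in M$. \emph{Case 2:} $(v,v')$ is blocked; since $v$ itself is free, the block must be caused by $v'$ already belonging to $V(M)$ via some strictly higher-class edge, and I would charge $\phi\gamma^i$ to that edge of $M$ containing $v'$. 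In either case, the associated weight of $v$ is routed to some edge of $M$, so the scheme is well-defined.

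Next I would bound the total charge received by a fixed edge $(x,y) \in M$ picked by the greedy from class $W_j$. Direct charges (Case 1) arise only from $v \in \{x,y\}$ when $f(v) = j$, so they contribute at most $2\phi\gamma^j$ in total. For indirect charges (Case 2), a vertex $v \notin \{x,y\}$ routes to $(x,y)$ only if its $M_{f(v)}$-partner is $x$ or $y$ and $f(v) < j$. Since each $M_i$ is a matching, $x$ has at most one partner in $M_i$, so for each class $i < j$ there is at most one indirect charge routed through $x$, and its value is $\phi\gamma^i$. Summing the geometric series,
\[
\sum_{i < j} \phi\gamma^i \;\leq\; \frac{\phi\gamma^j}{\gamma-1},
\]
and the same bound applies to indirect charges through $y$. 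Combining,
\[
\text{total charge on } (x,y) \;\leq\; 2\phi\gamma^j + \frac{2\phi\gamma^j}{\gamma-1} \;=\; \frac{2\gamma}{\gamma-1}\,\phi\gamma^j.
\]

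Finally, since $\phi\gamma^j$ is the rounded weight of $(x,y)$ and is therefore at most $w((x,y))$, summing over all edges in $M$ yields $\sum_v \phi\gamma^{f(v)} \leq \frac{2\gamma}{\gamma-1}\, w(M)$, as required. The main obstacle is the bookkeeping in Case 2: one must verify both that $v'$ is indeed already in $V(M)$ via a \emph{strictly} higher class (this relies on $M_i$ being a matching so that $v'$ cannot have been matched earlier within the same class) and that the per-class cap of one indirect charge per endpoint is correct; both follow from the matching property of each $M_i$.
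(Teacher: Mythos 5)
Your proposal is correct and follows essentially the same charging argument as the paper: direct charges from a vertex to its own $M_i$-edge when that edge enters $M$, indirect charges routed to the strictly higher-class blocking edge otherwise, with the per-endpoint, per-class cap of one indirect charge following from each $M_i$ being a matching. The only cosmetic difference is that you separate the class-$j$ direct contribution ($2\phi\gamma^j$) from the geometric sum of indirect contributions, whereas the paper folds both into a single bound of two charges per class $i\leq j$; the resulting estimate $\frac{2\gamma}{\gamma-1}\phi\gamma^j$ per edge is identical.
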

\begin{proof}
  Consider a vertex $u \in P_i$ and let $(u,v)$ be the edge in $M_i$
  adjacent to $u$. If $(u,v) \in M$ then we charge the weight
  associated with $u$ to the edge $(u,v)$. Thus, an edge $e \in M_i$
  is charged at most twice from vertices associated with its own
  weight class. Otherwise, if $(u,v) \notin M$ then there must be
  some other edge $e \in M \cap M_j$, for some $j > i$, that prevented us
  from adding $(u,v)$ to $M$, in which case we charge the weight
  associated with $u$ to $e$. Notice that $u \notin e$, for otherwise,
  $u$ would not be associated with $W_i$. Thus, the edge $e \in M_j$
  must be of the form $e =(v,x)$ and can only be charged twice from
  vertices in weight class $i$, once through $v$ and once through
  $x$.


  To bound the ratio between $w(M)$ and the total associated weight of
  the vertices, it suffices to bound the ratio between the weight of
  an edge $e\in M$ and the total associated weight of the vertices
  which are charged to $e$.  Assume that $e\in M_j$, then there are at
  most two vertices which are charged to $e$ and class $i$ for all
  $i\leq j$, and no vertex is associated to $e$ and class $i$ for
  $i>j$.  Hence, the total associated weight of these vertices is at
  most
  \[ 2\sum_{i\leq j} \phi \gamma^i < 2\phi \gamma^j\cdot
  \sum_{i'=0}^{\infty} {1\over \gamma^{i'}}= 2\phi \gamma^j\cdot
  {1\over {1-1/\gamma}}=\phi \gamma^j \cdot {2\gamma\over
    \gamma-1},\] and the claim follows since $w(e)\geq \phi \gamma^j$.
\end{proof}

It remains to bound $\opt'$ with respect to the total associated weight.
\begin{lemma}
$\opt'$ is at most the total weight associated with all vertices.
\end{lemma}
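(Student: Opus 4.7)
I would prove this by a simple charging argument that distributes the rounded weight of $\opt$ among the vertices so that no vertex receives more than its associated weight.

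Start by writing $\opt' = \sum_i |\opt_i|\cdot \phi\gamma^i$, so that it suffices to exhibit, for every $e = (u,v)\in \opt_i$, an endpoint with associated weight at least $\phi\gamma^i$, in a way that no vertex is chosen twice. The key structural fact to exploit is that $M_i$ is a \emph{maximal} matching of the subgraph induced by $W_i$: since $e$ itself belongs to $W_i$ and was revealed during the stream, either $e \in M_i$, or $e$ was rejected because one of its endpoints was already covered by some edge previously placed in $M_i$; once an edge enters $M_i$ it never leaves. Either way, at the end of the stream at least one endpoint of $e$, say $u$, is incident to some edge of $M_i$.

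By the peeling definition of the $P_j$'s, this $u$ must lie in $P_j$ for some $j \geq i$, namely the largest index for which $u$ is incident to an edge of $M_j$, and therefore has associated weight $\phi\gamma^j \geq \phi\gamma^i$. I would thus charge the rounded weight of $e$ to $u$. Finally, because $\opt$ is itself a matching, each vertex is an endpoint of at most one edge of $\opt$ and so absorbs at most one such charge, well within its associated weight budget. Summing over $e \in \opt$ yields the lemma.

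The argument is essentially unobstructed; the only point requiring a moment's care is verifying that the chosen endpoint $u$ lies in some $P_j$ with $j \geq i$ (rather than, say, dropping out of the $P$-partition entirely), which is immediate from the fact that $P_j$ is defined to include every endpoint of $M_j$ not already assigned to a strictly higher weight class.
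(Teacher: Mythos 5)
Your proposal is correct and follows essentially the same route as the paper: exploit maximality of $M_i$ to conclude that each $\opt_i$-edge has an endpoint covered by $M_i$ (hence lying in some $P_j$ with $j\ge i$ and carrying associated weight at least $\phi\gamma^i$), and then charge each optimal edge to that endpoint, using the fact that $\opt$ is a matching so no vertex is charged twice. The paper states the key endpoint claim in the contrapositive and leaves the disjointness of the charges implicit, whereas you spell both out directly, but the underlying argument is identical.
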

\begin{proof}
  It suffices to show that for every edge $e=(x,y)\in \opt_i$ the
  maximum of the associated weights of $x$ and $y$ is at least the
  rounded weight of $e$.  Suppose that this claim does
  not hold, then $x$ and $y$ are not covered by $M_i$, as otherwise
  their associated weight would be at least $\phi \gamma^i$. Hence, when the algorithm considered $e$, we would have added $e$ to
  $M_i$, contradicting our assumption that $x$ and $y$ are not covered
  by $M_i$.
\end{proof}

Using the above sequence of lemmas, and recalling that we lose another
$\eps$ in the approximation ratio due to disregarding edges of weight at most $2\eps
w_{\max}\over n$, we obtain the following inequality:
\begin{equation} \label{eq1}
\opt \leq \gamma \opt' \leq \left(\gamma \cdot {2\gamma\over \gamma-1} +\eps\right) \cdot w(M).
\end{equation}
Therefore, we establish the following theorem.
\begin{theorem} \label{simple_det_theorem}
Our simple deterministic algorithm has an approximation ratio of $( {2\gamma^2\over
\gamma-1}+\eps)$. This ratio can be optimized to $8+\eps$ by picking $\gamma=2$.
\end{theorem}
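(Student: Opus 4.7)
The plan is essentially to chain the three lemmas that have already been proven and then perform a one-variable minimization. I would proceed in two stages.

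First, I would combine the three lemmas verbatim. By the third lemma $\opt \leq \gamma \opt'$; by the second lemma $\opt'$ is at most the total associated weight of all vertices; by the first of the three lemmas this total is at most $\tfrac{2\gamma}{\gamma-1} \cdot w(M)$. Composing these three inequalities gives $\opt \leq \tfrac{2\gamma^2}{\gamma-1}\cdot w(M)$ on the instance that remains after dropping edges of weight at most $\tfrac{2\eps w_{\max}}{n}$. Adding back the discarded edges costs at most an extra additive term of $\eps\cdot \opt$ as already observed before the class definitions, which contributes the $+\eps$ in the ratio and yields exactly inequality~(\ref{eq1}). This establishes the first claim of the theorem.

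Second, I would optimize over $\gamma > 1$. Let $f(\gamma) = \tfrac{2\gamma^2}{\gamma-1}$. Differentiating gives
\[
f'(\gamma) \;=\; \frac{4\gamma(\gamma-1) - 2\gamma^2}{(\gamma-1)^2} \;=\; \frac{2\gamma(\gamma-2)}{(\gamma-1)^2},
\]
which is negative on $(1,2)$ and positive on $(2,\infty)$, so $\gamma=2$ is the unique minimizer and $f(2)=8$. Substituting into~(\ref{eq1}) yields the claimed $8+\eps$ approximation ratio.

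There is no real obstacle in this argument: the nontrivial combinatorial work was done in the three preceding lemmas, and what remains is purely algebraic. The only minor subtlety worth flagging is that the $+\eps$ in the final ratio accounts for the preprocessing step of ignoring edges of weight below $\tfrac{2\eps w_{\max}}{n}$, not for any error introduced by the rounding to powers of $\gamma$; the latter is absorbed into the factor $\gamma$ of the first lemma.
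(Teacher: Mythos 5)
Your proof is correct and follows the paper's argument exactly: chaining the three lemmas gives inequality~(\ref{eq1}), the $\eps$ term accounts for the preprocessing that discards edges of weight at most $\frac{2\eps w_{\max}}{n}$, and the calculus to minimize $\frac{2\gamma^2}{\gamma-1}$ at $\gamma=2$ is a straightforward verification of the paper's stated optimum. The only trivial slip is that you reference the lemmas in reversed order relative to the paper (what you call the ``third'' lemma is the paper's first, etc.), but the content of each cited inequality is correct.
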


The next example demonstrates that the analysis leading to Theorem \ref{simple_det_theorem} is tight.

\begin{example}
  Let $k$ be some large enough integer and $\eps > 0$ be
  sufficiently small. Consider the instance depicted in
  Figure~\ref{fig:tight-example-8}, where $M=M_k$ consists of a single
  edge $(x,y)$ with weight $\gamma^k$. For every $0 \leq i < k$, the
  matching $M_{i}$ consists of exactly two edges $(\alpha_i,x)$ and
  $(y,\beta_i)$ each of weight $\gamma^i$, and $\opt_i$ consists of
  two edges $(\alpha_i,a_i)$ and $(\beta_i,b_i)$ each of weight
  $\gamma^{i+1}-\eps$. In addition, there are two edges $(a_k, x)$ and
  $(b_k,y)$ whose weight is $\gamma^{k+1} - \eps$. It is easy to
  see that each $M_i$ is indeed maximal in its own weight class. Given
  these matchings, our greedy selection rule will output a single edge
  $(x,y)$ with total weight $\gamma^k$ (notice that computing a
  maximum weight matching in $M_0 \cup \cdots \cup M_k$ does not help
  when $\gamma \geq 2$).  Moreover, the value of the optimal solution
  matches our upper bound up to an additive $O(\eps)$ term.
\end{example}

\begin{figure}[htbp]
  \centering

  \beginpgfgraphicnamed{tight-example-8}
  \begin{tikzpicture}[xscale=2.5,yscale=2]
    \tikzstyle{every node}=[label distance=2pt,fill,inner sep=1.25pt]
    \tikzstyle{weight}=[fill=white,inner sep=3pt]

    \draw (2,2) node[label=below:$x$] (x) { } -- node[weight] {$\gamma^k$} +(1,0)
    node[label=below:$y$] (y) {} ;

    \foreach \y/\n/\m in {1/0/1,3/k-1/{k}}
    {
      \draw (0,\y) node[label=above:$a_{\n}$] {}  -- node[weight] {$\gamma^{\m} \! - \! \eps$}
            (1,\y) node[label=above:$\alpha_{\n}$] {} -- node[weight] {$\gamma^{\n}$}
            (x);
      \draw (y) -- node[weight]  {$\gamma^{\n}$}
            (4,\y) node[label=above:$\beta_{\n}$] {}  -- node[weight]
            {$\gamma^{\m} \! - \! \eps$}
            (5,\y) node[label=above:$b_{\n}$] {};
    };
    \draw[loosely dotted, line width=1pt] (0.5,1.75) -- (0.5,2.25);
    \draw[loosely dotted, line width=1pt] (4.5,1.75) -- (4.5,2.25);
    \draw (x) -- node[weight] {$\gamma^{k+1}\! -\!\eps$} +(0.2,1.3)
    node[label=above:$a_k$] {};
    \draw (y) -- node[weight] {$\gamma^{k+1}\! -\!\eps$} +(-0.2,1.3) node[label=above:$b_k$] {};
  \end{tikzpicture}
  \endpgfgraphicnamed

  \caption{ \label{fig:tight-example-8} A tight example for our
    deterministic algorithm. }

\end{figure}
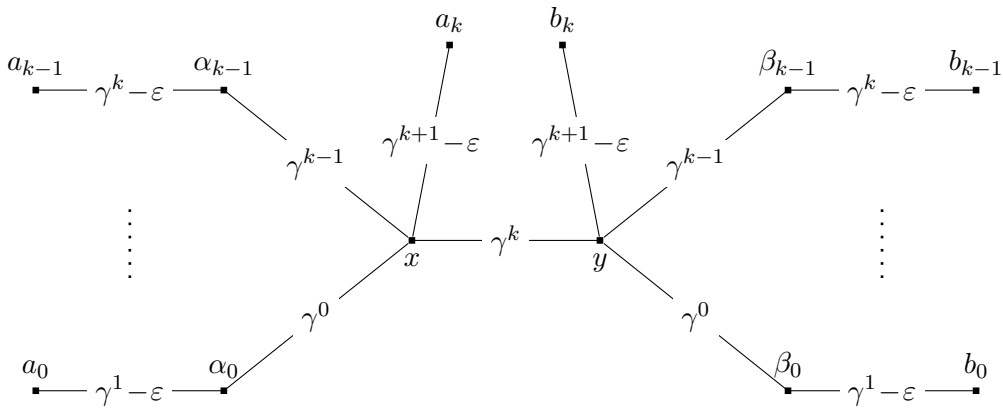

\subsection{Improved approximation ratio through randomization} \label{ran:alg}

In what follows, we analyze a randomized variant of the deterministic algorithm which was presented in the previous subsection. In general, this variant sets the value of $\phi$ to be $\phi=\gamma^{\delta}$ where $\delta$ is a random variable.  This method is commonly referred to as {\it randomized geometric grouping}.

Formally, let $\delta$ be a continuous random variable which is uniformly distributed on the interval $[0,1)$.  We define the weight class $W_i(\delta)=[\gamma^{i+\delta},\gamma^{i+1+\delta})$, and run the algorithm as in the previous subsection.  Note that this algorithm uses only the partition of the edges into classes and not the precise values of their weights.  In addition, we denote by $M(\delta)$ the resulting matching obtained by the algorithm, and by $TW(\delta)$ the total associated weight of the vertices, where for a vertex $p \in P_i$ we define its associated weight to be $\gamma^{i+\delta}$ (i.e., the minimal value in the interval $W_i(\delta)$). We also denote by $\opt'(\delta)$ the value of $\opt'$ for this particular $\delta$.

For any fixed value of $\delta$, inequality (\ref{eq1}) immediately implies $\opt'(\delta) \leq ( {2\gamma\over \gamma-1} +\eps )\cdot w(M(\delta)).$
Note that $\opt'(\delta)$ and $w(M(\delta))$ are random variables, such that for each realization of $\delta$ the above inequality holds.  Hence, this inequality holds also for their expected values.  That is, we have established the following lemma where $\expar{\delta}{\cdot}$ represents expectation with respect to the random variable $\delta$.
\begin{lemma}\label{lem2prime}
$\expar{\delta}{\opt'(\delta)} \leq ( {2\gamma\over \gamma-1} +\eps )\cdot \expar{\delta}{w(M(\delta))}$.
\end{lemma}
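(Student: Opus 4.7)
The statement is essentially a one-line consequence of inequality~\eqref{eq1} combined with monotonicity of expectation, so my proposal is correspondingly short. The plan is to observe that the three lemmas preceding Theorem~\ref{simple_det_theorem} were entirely pointwise in the parameter $\phi$: they only used the fact that the edges had been partitioned into geometric classes of ratio $\gamma$, together with the greedy/maximal behavior of the algorithm, and never used the specific value of $\phi$. Hence for every fixed realization $\delta \in [0,1)$, setting $\phi = \gamma^\delta$ recovers exactly the deterministic setting of Subsection~\ref{det:alg} with classes $W_i(\delta)$ in place of $W_i$, and the entire derivation of~\eqref{eq1} goes through unchanged. This yields the pointwise bound
\[
\opt'(\delta) \;\leq\; \left( \frac{2\gamma}{\gamma-1} + \eps \right) \cdot w(M(\delta))
\qquad \text{for every } \delta \in [0,1).
\]

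The second step is simply to take expectations on both sides. Since the inequality holds for every realization of $\delta$, monotonicity of expectation gives $\expar{\delta}{\opt'(\delta)} \leq \expar{\delta}{ ( \frac{2\gamma}{\gamma-1} + \eps ) \cdot w(M(\delta)) }$, and then linearity of expectation (the prefactor is a deterministic constant independent of $\delta$) pulls the factor out and yields exactly the claim.

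There is no substantive obstacle. The only mild subtlety worth a sentence of justification is measurability: both $\opt'(\delta)$ and $w(M(\delta))$ are step functions of $\delta$, since their values can only change when some edge weight crosses a class boundary $\gamma^{i+\delta}$, which occurs for only finitely many values of $\delta$ (a set of Lebesgue measure zero). Thus both expectations are well-defined and finite, and the pointwise-to-expectation passage is rigorous. Beyond this bookkeeping, the lemma is immediate.
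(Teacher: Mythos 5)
Your proposal is correct and matches the paper's own argument exactly: the paper likewise notes that inequality~\eqref{eq1} holds pointwise for every realization of $\delta$ (since the analysis in Subsection~\ref{det:alg} never used the specific value of $\phi$, only the geometric structure of the classes), and then passes to expectations. The brief remark on measurability is extra diligence the paper omits, but it is harmless and does not change the essence of the proof.
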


We next lower bound $\opt$ in terms of  $\expar{\delta}{\opt'(\delta)}$.
\begin{lemma} \label{lem:3prime}
$\frac{\gamma \ln \gamma}{\gamma-1}\cdot \expar{\delta}{\opt'(\delta)} \geq \opt$.
\end{lemma}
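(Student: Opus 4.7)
The plan is to establish the slightly stronger equality
\[
\expar{\delta}{\opt'(\delta)} \;=\; \frac{\gamma-1}{\gamma \ln \gamma}\cdot \opt,
\]
from which the lemma follows by multiplying both sides by $\frac{\gamma\ln\gamma}{\gamma-1}$. Since $\opt'(\delta)$ is additive over the edges of a fixed optimal matching $\opt$, linearity of expectation reduces the task to computing, for each fixed weight $w=w(e)$, the expected value of the rounded weight of $e$ as $\delta$ ranges uniformly over $[0,1)$, and then summing.

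First I would fix $e\in\opt$ with weight $w$ and identify the class of $e$ as a function of $\delta$. The condition $\gamma^{i+\delta}\le w<\gamma^{i+1+\delta}$ is equivalent to $i+\delta\in(\log_\gamma w-1,\,\log_\gamma w]$. Writing $\log_\gamma w=k+\mu$ with $k\in\Z$ and $\mu\in[0,1)$, a short case analysis shows that $i=k$ when $\delta\in[0,\mu]$ and $i=k-1$ when $\delta\in(\mu,1)$, and in either case the rounded weight contributed by $e$ to $\opt'(\delta)$ is $\gamma^{i+\delta}$.

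Next I would split the integral at $\mu$ and compute
\[
\expar{\delta}{\text{rounded weight of } e} \;=\; \int_{0}^{\mu}\gamma^{k+\delta}\,d\delta \;+\; \int_{\mu}^{1}\gamma^{k-1+\delta}\,d\delta,
\]
using $\int \gamma^{x}\,dx=\gamma^{x}/\ln\gamma$. The two antiderivatives combine cleanly: the middle terms cancel and what remains is $\frac{1}{\ln\gamma}\bigl(\gamma^{k+\mu}-\gamma^{k+\mu-1}\bigr)$. Using $\gamma^{k+\mu}=w$, this simplifies to $\frac{\gamma-1}{\gamma\ln\gamma}\cdot w$. Summing this identity over all $e\in\opt$ yields the desired expression for $\expar{\delta}{\opt'(\delta)}$, and hence the lemma.

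This is the standard randomized geometric grouping calculation, so no serious obstacle is expected. The one point that requires care is correctly identifying the active class $i$ as a function of $\delta$ for a fixed edge weight $w$; once the case split on $\delta\le\mu$ versus $\delta>\mu$ is in place, the integration is routine and the telescoping between the two pieces is what produces the factor $\frac{\gamma-1}{\gamma\ln\gamma}$.
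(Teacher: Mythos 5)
Your proposal is correct and follows essentially the same route as the paper: reduce to a single edge by linearity of expectation, identify the active class index as a function of $\delta$ by splitting at the fractional part of $\log_\gamma w(e)$, and compute the two elementary integrals whose telescoping yields $\expar{\delta}{w'_\delta(e)} = \frac{\gamma-1}{\gamma\ln\gamma}\, w(e)$. The only cosmetic difference is that you state the per-edge result as an exact equality, whereas the paper phrases it as the inequality actually needed (but establishes the same equality in its computation).
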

\begin{proof}
We will show the corresponding inequality for each edge $e\in \opt$.  We denote by $w'_{\delta}(e)$ the rounded weight of $e$ for a specific value of $\delta$.  Then, it suffices to show that $\frac{\gamma \ln \gamma}{\gamma-1}\cdot \expar{\delta}{w'_{\delta}(e)} \geq w(e)$.
 Let $p$ be an integer, and let $0\leq \alpha < 1$ be the value that satisfies $w(e)=\gamma^{p+\alpha}$. Then, for $\delta\leq \alpha$,
$w'_{\delta}(e)=\gamma^{p+\delta}$, and for $\delta
> \alpha$, $w'_{\delta}(e)=\gamma^{p-1+\delta}$, thus the expected rounded weight of $e$ over the choices of $\delta$ is
\[ \expar{\delta}{w'_{\delta}(e)} = \int_0^\alpha
\gamma^{p+\delta}d\delta+\int_\alpha^1
\gamma^{p-1+\delta}d\delta={1\over \ln \gamma}\cdot \left(
\gamma^p(\gamma^\alpha-1)+\gamma^{p-1}(\gamma-\gamma^\alpha)\right)=w(e)\cdot \left(1-\frac
1{\gamma}\right){1\over \ln \gamma}, \]
and the claim follows.
\end{proof}

Combining the above two lemmas we obtain that the expected  weight of the resulting solution is at least $(\frac {(\gamma-1)^2}{2\gamma^2\ln \gamma} + \eps) \cdot \opt$.  This approximation ratio is optimized for $\gamma \approx 3.513$, where it is roughly $(4.9108 + \eps)$. Hence, we have established the following theorem.

\begin{theorem}
The randomized algorithm has an approximation ratio of roughly $4.9108+\eps$.
\end{theorem}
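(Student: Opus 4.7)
The plan is to derive the theorem as an immediate consequence of Lemmas~\ref{lem2prime} and~\ref{lem:3prime}, followed by a one-variable optimization of the resulting constant. First I would chain the two lemmas: Lemma~\ref{lem:3prime} gives $\opt \le \frac{\gamma \ln \gamma}{\gamma - 1} \cdot \expar{\delta}{\opt'(\delta)}$, and inserting the bound of Lemma~\ref{lem2prime} on the right yields
\[ \opt \le \frac{\gamma \ln \gamma}{\gamma - 1} \left( \frac{2\gamma}{\gamma - 1} + \eps \right) \expar{\delta}{w(M(\delta))}. \]
Rearranging, the expected weight of the returned matching is at least $\bigl( \frac{(\gamma-1)^2}{2\gamma^2 \ln \gamma} - O(\eps) \bigr) \cdot \opt$, which gives the desired approximation ratio up to the choice of $\gamma$.

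Second, I would optimize the leading constant $f(\gamma) = \frac{2\gamma^2 \ln \gamma}{(\gamma-1)^2}$ over $\gamma > 1$. A routine logarithmic differentiation gives $\frac{f'(\gamma)}{f(\gamma)} = \frac{2}{\gamma} + \frac{1}{\gamma \ln \gamma} - \frac{2}{\gamma - 1}$, and setting this to zero and clearing denominators reduces to the transcendental equation $\gamma - 1 = 2 \ln \gamma$. A short sign-analysis shows that $f$ has a unique critical point in $(1, \infty)$; numerically, this root is $\gamma^\star \approx 3.513$. Substituting back yields $f(\gamma^\star) \approx 4.9108$, matching the claim.

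The only subtlety is the bookkeeping of the additive $\eps$ term through the composition of the two lemmas. Since $\gamma$ is chosen as a constant bounded away from $1$, the prefactor $\frac{\gamma \ln \gamma}{\gamma - 1}$ is itself an absolute constant, so the $\eps$ from Lemma~\ref{lem2prime} is only scaled by a constant and can be absorbed by redefining the error parameter (formally, by applying the algorithm with $\eps' = \eps \cdot \frac{\gamma - 1}{\gamma \ln \gamma}$). There is no real obstacle: the theorem falls out as a direct corollary of the two lemmas, and the only nontrivial work is the calculus minimization, which is a routine exercise.
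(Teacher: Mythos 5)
Your proposal is correct and follows essentially the same route as the paper: chain Lemma~\ref{lem:3prime} with Lemma~\ref{lem2prime} to get $\opt \le \bigl(\frac{2\gamma^2\ln\gamma}{(\gamma-1)^2}+O(\eps)\bigr)\cdot \expar{\delta}{w(M(\delta))}$, then minimize $\frac{2\gamma^2\ln\gamma}{(\gamma-1)^2}$ over $\gamma>1$, which the paper states without derivation and you carry out explicitly (reducing to $\gamma-1=2\ln\gamma$, giving $\gamma^\star\approx 3.513$ and ratio $\approx 4.9108$). Your write-up is if anything slightly cleaner on the $\eps$-bookkeeping (the paper's inline inequality has a sign slip, writing $+\eps$ where the guarantee on $w(M(\delta))$ should carry a $-O(\eps)$), but the substance and structure of the argument are identical.
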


\subsection{Derandomization} \label{final_alg}

Prior to presenting our de-randomization, we slightly modify the
randomized algorithm of the previous subsection.  In this variation,
instead of picking $\delta$ uniformly at random from the interval
$[0,1)$ we pick $\delta'$ uniformly at random from the discrete set
$\{ 0,{1\over q},{2\over q},\ldots ,{q-1\over q}\}$.  We apply the
same method as in the previous section where we replace $\delta$ by
$\delta'$.  Then, using Lemma \ref{lem2prime}, we obtain $\expar{\delta'}
{\opt'(\delta')} \leq( {2\gamma\over \gamma-1} +\eps
)\cdot \expar{\delta'}{w(M(\delta'))}$.  To extend Lemma
\ref{lem:3prime} to this new setting, we note that $\delta'$ can be obtained by first picking $\delta$ and then rounding it down to the largest number in $\{ 0,{1\over q},{2\over q},\ldots
,{q-1\over q}\}$ which is at most $\delta$.  In this way, we couple the
distributions of $\delta$ and $\delta'$.  Now consider the rounded
weight of an edge $e$ in $\opt$ in the two distinct values of $\delta$
and $\delta'$.  The ratio between the two rounded weight is at most
$\gamma^{1/ q}$.  Therefore, we establish that $\frac{\gamma \ln
  \gamma}{\gamma-1}\cdot \gamma^{1/q} \cdot \expar{\delta}{\opt'(\delta)} \geq \opt$.  Therefore, the resulting approximation ratio of the
new variation is $\frac
{2\gamma^{2+1/q}\ln \gamma}{(\gamma-1)^2}+\eps$.  By settinf $q$ to be
large enough (picking $q=\lceil \frac
1{\log_{\gamma}(\eps/5)}\rceil$ is enough), the resulting
approximation ratio is bounded by $\frac {2\gamma^{2}\ln
  \gamma}{(\gamma-1)^2}+2\eps$.

De-randomizing the new variation in the semi-streaming model is straightforward.  We simply run in parallel all $q$ possible outcomes of the
algorithm, one for each possible value of $\delta'$, and pick the best solution among the $q$ solutions we obtained. Since $q$
is a constant (for fixed values of $\eps$), the resulting
algorithm is still a semi-streaming algorithm whose performance
guarantee is $4.9108+2\eps$.  By scaling $\eps$ prior to applying the
algorithm, we establish the following result.

\begin{theorem}
  For any fixed $\eps > 0$, there is a deterministic one-pass
  semi-streaming $(4.9108+\eps)$-approximation algorithm for the
  weighted matching problem.  This algorithm processes each input edge in constant time and required $O(n)$ time at the end of the input to compute the final output.
\end{theorem}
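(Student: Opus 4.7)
The plan is to derandomize the algorithm of Section~\ref{ran:alg} by discretizing the random shift $\delta$ to a finite grid of constant size, running one copy of the deterministic algorithm for each grid point in parallel, and outputting the best matching found. Since the grid size will be a constant depending only on $\eps$ and $\gamma$, the parallel execution stays within the $O(n \cdot \polylog(n))$ memory budget of the semi-streaming model, and the per-edge processing time and $O(n)$ post-processing time survive up to a constant factor.

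First I would let $\delta'$ be uniformly distributed on the finite set $\{0, 1/q, 2/q, \ldots, (q-1)/q\}$ and rerun the analysis of Section~\ref{ran:alg} with $\delta$ replaced by $\delta'$. Lemma~\ref{lem2prime} is a pointwise statement in $\delta$ and therefore immediately yields $\expar{\delta'}{\opt'(\delta')} \leq (\frac{2\gamma}{\gamma-1} + \eps)\cdot \expar{\delta'}{w(M(\delta'))}$. The step that requires care is the analogue of Lemma~\ref{lem:3prime}, which relates $\opt$ to the expected rounded optimum. Here I would couple $\delta$ and $\delta'$ by drawing $\delta$ uniformly on $[0,1)$ and defining $\delta'$ to be $\delta$ rounded down to the nearest multiple of $1/q$. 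Under this coupling, for every $e \in \opt$ the rounded weight $w'_{\delta'}(e)$ and $w'_\delta(e)$ lie in the same weight class up to a shift of at most $1/q$ in the exponent, so they differ by a multiplicative factor of at most $\gamma^{1/q}$. Combining this with Lemma~\ref{lem:3prime} gives $\frac{\gamma \ln \gamma}{\gamma-1} \cdot \gamma^{1/q} \cdot \expar{\delta'}{\opt'(\delta')} \geq \opt$, and hence the discretized algorithm attains expected approximation ratio $\frac{2\gamma^{2+1/q} \ln \gamma}{(\gamma-1)^2} + \eps$.

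Next I would choose $q = \lceil 1/\log_\gamma(\eps/5)\rceil$, which is a constant for fixed $\eps$ and ensures $\gamma^{1/q} \leq 1 + O(\eps)$; this controls the loss from discretization and gives expected ratio at most $\frac{2\gamma^2 \ln \gamma}{(\gamma-1)^2} + 2\eps$, which is bounded by $4.9108 + 2\eps$ at the optimized value $\gamma \approx 3.513$. Derandomization is then immediate: run $q$ independent copies of the deterministic algorithm in parallel, one for each value of $\delta'$, and at the end of the stream return the heaviest of the $q$ output matchings. Since the maximum of a finite set of values is at least its average, the returned matching is at least as heavy as the expectation over $\delta'$, yielding a deterministic $(4.9108 + 2\eps)$-approximation. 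Rescaling $\eps$ at the outset gives the stated bound $4.9108 + \eps$.

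The only step with any subtlety is the coupling argument used to extend Lemma~\ref{lem:3prime} to the discretized distribution; everything else is a direct repetition of Section~\ref{ran:alg} together with the standard observation that $\max \geq \mathrm{avg}$. The resource bounds require no new ideas: each of the $q = O(1)$ parallel copies maintains $O(\log_\gamma(n/\eps))$ maximal matchings of total size $O(n\log_\gamma(n/\eps))$, processes each incoming edge in $O(1)$ time (a constant number of class lookups and endpoint checks), and the final greedy pass over the stored edges is linear in the memory used, giving the claimed $O(n)$ post-processing time after the constant $O(\log_\gamma(n/\eps))$ factor is absorbed into $\polylog(n)$.
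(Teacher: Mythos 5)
Your proposal is correct and follows essentially the same route as the paper: discretize the shift $\delta$ to a grid of size $q = \lceil 1/\log_\gamma(\eps/5)\rceil$, use the same rounding-down coupling to bound the extra loss by a factor $\gamma^{1/q}$, run the $q$ constant-many copies in parallel, and return the heaviest matching, invoking $\max \geq \mathrm{avg}$. The only small (and harmless) imprecision is the remark that the rounded weights ``differ by a multiplicative factor of at most $\gamma^{1/q}$''---this holds in the one direction needed for the bound, namely $w'_\delta(e) \leq \gamma^{1/q} w'_{\delta'}(e)$, which is exactly what the argument requires.
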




\section{Online Preemptive Matching} \label{sec:lower_bound}

In this section, we established the following theorem.

\begin{theorem}
The competitive ratio of any deterministic preemptive
online algorithm is at least $\R\approx 4.967$, where $\R$ is the
unique real solution of the equation $x^3=4(x^2+x+1)$.
\end{theorem}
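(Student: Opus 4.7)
The plan is to prove the lower bound via an adaptive adversary argument. The adversary constructs the input stream online, choosing each next edge based on the algorithm's current matching, in such a way that no deterministic preemptive online algorithm can retain a matching whose weight exceeds $w(\opt)/\R$.

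First, I would fix a growth parameter $x > 1$ and design a construction in up to three phases, where phase $j$ introduces edges of weight $x^{j-1}$. At the start of each phase, the adversary inspects the algorithm's current matching and reveals a fresh batch of edges that share endpoints with it. The algorithm then faces a binary dilemma: either preempt its lighter edges in order to absorb the new heavier ones (permanently losing the preempted weights, since preempted edges cannot be reinserted in the online model), or refuse the new edges altogether. The adversary sets up four analogous opportunities to iterate this dilemma at each phase; this is the combinatorial source of the factor $4$ that appears on the right-hand side of the cubic.

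Next, I would organize the analysis by cases on the algorithm's responses. In the ``early termination'' branches, the algorithm declines to preempt at some phase $j \leq 3$, at which point the adversary halts the stream. The ratio $w(\opt)/w(\alg)$ in each such branch reduces to a straightforward calculation and works out to at least $x$. In the remaining branch, where the algorithm preempts at every opportunity, its final matching ends up with weight roughly $x^2$ (essentially one surviving top-level heavy edge), whereas the revealed graph admits a matching of weight roughly $4(1 + x + x^2)$, obtained by pooling together the four edges from each of the three phases that were either preempted or never accepted. The ratio in this branch is therefore at least $4(1+x+x^2)/x^2$.

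Finally, I would balance the two bounds to determine the optimal $x$. The minimum of $x$ and $4(1+x+x^2)/x^2$ is maximized precisely when the two expressions coincide, which is exactly the equation $x^3 = 4(x^2+x+1)$ from the statement, whose unique real root is $\R \approx 4.967$. The hard part will be the combinatorial design of the adversary's four parallel branches at each phase: the four-tuples of edges introduced at different phases must fit together so that (i) at every decision point the algorithm's meaningful options reduce to a clean binary choice, and (ii) the leftover edges in the final graph can be assembled into a single genuine matching of weight $4(1+x+x^2)$, with no hidden vertex conflicts that would dilute $w(\opt)$. Careful indexing of vertices across phases is the principal technical burden.
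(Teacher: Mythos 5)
Your proposal identifies the right final equation but substitutes a guessed combinatorial structure for the actual adversarial construction, and the guess does not match the mechanism that produces the bound. The paper's adversary does \emph{not} use a fixed three-phase construction with weights $1, x, x^2$; it builds an a priori unbounded sequence of steps with two interleaved weight sequences $w_i$ and $w'_i$ defined by a second-order linear recurrence depending on the target ratio $\C$. At each step the algorithm holds exactly \emph{one} edge and faces a preempt/keep dilemma, and the adversary continues indefinitely. The crucial technical point is that the prefix sums $S_k = \sum_{j\le k} w_j$ satisfy
\[
(2\C+1) S_{k+1} = (\C^2+2\C+2) S_k - (\C^2+\C+1) S_{k-1},
\]
whose characteristic equation has discriminant $\C(\C^3 - 4\C^2 - 4\C - 4)$. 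The cubic $x^3 = 4(x^2+x+1)$ appears precisely because, for $\C < \R$, this discriminant is negative, so the characteristic roots are complex, so $S_k$ oscillates and eventually goes negative, which forces some $w_j < 0$, which guarantees the process terminates at a step $n$ where the ratio $S_{n-1}/w_{n-1} \ge \C$. The factor $4$ is an artifact of the quadratic formula, not a count of parallel branches.

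Your sketch leaves a genuine gap that cannot be filled as stated: you never actually produce an adversary in which the algorithm ends with $\approx x^2$ and opt has $\approx 4(1+x+x^2)$ as a feasible matching. Making ``four analogous opportunities per phase'' concrete requires 12 vertex-disjoint edges for opt while the algorithm is cornered into holding a single top-weight edge -- but the very mechanism that corners the algorithm (all new heavy edges must share endpoints with the algorithm's current edge) conflicts with opt collecting four disjoint edges per level. You also have not verified that the ``early termination'' branches each give ratio at least $x$: in the paper this check is itself nontrivial and relies on the identities $w'_{i+1}+w_{i+1}+S_{i-1}=\C w_i$ and $S_{i-2}+w_i+w_{i+1}+w'_{i+1}=\C w'_i$, which have no analogue in your fixed geometric-weight scheme. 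Reverse-engineering the cubic from the statement does not constitute a proof; the core difficulty is exactly the part you defer to ``careful indexing of vertices across phases,'' and the correct indexing turns out to need an unbounded number of phases and non-geometric weights.
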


Recall that the algorithms of \cite{FeigenbaumKMSZ05} and
\cite{McGregor05} can be viewed as online preemptive algorithms; their competitive ratios are $6$ and $5.828$, respectively.

\paragraph{Definitions of some constants.} Let $\C=\R-\eps$ for some $\eps>0$ and assume that a deterministic online algorithm achieves a competitive ratio of at most
$\C'=\C-\eps$. We construct an input graph iteratively, and show
that after a finite number of steps, the competitive ratio is
violated.

In the construction of the input, all edge weights come from
two weight sequences. The main weight sequence is $w_1, w_2,
\ldots$, and an additional weight function is
$w'_2,w'_3,\ldots$. These sequences are defined as follows:
\begin{itemize}
\item $w_1=1$, and $w_{k+1}=\frac 1{2\C+1}
((\C^2+1)w_{k}-\C\sum_{i=1}^{k-1} w_{i})$ for $k\geq1$.

\item $w'_{k+1}=\frac 1{\C}((\C+1)w_{k+1}-w_k)$.
\end{itemize}%
The first sequence is defined for $k+1$
only as long as $w_{k-1} \geq w_{k-2}$.  As soon as
$w_{k}<w_{k-1}$, the sequence stops with $w_{k+1}$, and the
length of the sequence $w_i$ is $n=k+1$. We later show that such
a value $k$ must exist. Let $S_i=\sum_{j=1}^i w_j$ (and $S_0=0$).

\paragraph{Properties of the sequences.}
By definition, since $w'_{i+1}=w_{i+1}+\frac1{\C}(w_{i+1}-w_i)$,
if $w_{k}<w_{k-1}$, then $w_{k+1}<w_k$ holds as well. Note that
$w_i\leq w'_i$ for all $i<n-1$, by definition,
since $w_i\geq w_{i-1}$, but $w'_{n-1}<w_{n-1}$.
In addition, we have the following:
\[w'_{i+1}+w_{i+1}+S_{i-1}=\C w_i .\]
This equality holds for $i=1,2,\ldots ,n-2$ since
\begin{eqnarray*}
w'_{i+1}+w_{i+1}+S_{i-1} & = & {\C +1\over \C} w_{i+1}-{w_i\over \C} +
w_{i+1}+S_{i-1}\\
& = & {2\C+1 \over \C} \cdot {1\over 2\C+1} \cdot
((\C^2+1)w_i-\C S_{i-1}) +S_{i-1} - {w_i \over \C} \\
& = & \C w_i,
\end{eqnarray*}
where the first equality holds by definition of $w'_{i+1}$, the
second equality holds by definition of $w_{i+1}$, and the
third one by simple algebra. In addition,
\[S_{i-2}+w_{i}+w_{i+1}+w'_{i+1}=\C w'_i .\]
The last equality holds for $i=2,3,\ldots ,n-2$ since
\begin{eqnarray*}
S_{i-2}+w_{i}+w_{i+1}+w'_{i+1} & = & S_{i-2}+w_i+ {2\C +1\over \C}
w_{i+1}-{w_i\over \C} \\
& = & S_{i-2} + {\C-1\over \C}w_i + {2\C+1\over \C}\cdot \frac 1{2\C+1} ((\C^2+1)w_{i}-\C S_{i-1})) \\
& = & (\C+1)w_i +S_{i-2}-S_{i-1} \\
& = & (\C+1)w_i-w_{i-1} \\
& = & \C w'_i,
\end{eqnarray*}
where the first equality holds by definition of $w'_{i+1}$, the second by
definition of $w_{i+1}$, the third by simple algebra, the fourth
by definition of $S_{i-1}$ and $S_{i-2}$, and the last one by
definition of $w'_i$.

\paragraph{Input construction, step 1.}
To better understand our construction, we advice the reader to consult Figure~\ref{fig:preemptive-lowerbound}. The input is created in $n$ steps. In the initial step, two edges $(a_1,x_1)$ and $(b_1,x_1)$, each of weight
$w_1$, are introduced. Assume that after both edges have arrived,
the online algorithm holds the edge $(a_1,x_1)$. All future edges
either have endpoints which are new vertices, or in the set
$\{a_1,x_1\}$ (i.e., they do not contain $b_1$ as an endpoint).
An optimal solution keeps $(b_1,x_1)$.

\begin{figure}[htbp]
  \centering

  \beginpgfgraphicnamed{preemptive-lowerbound}
  \begin{tikzpicture}[scale=1.1,decoration={zigzag,amplitude=2pt,segment
      length=4pt}]
    \footnotesize
    \tikzstyle{round}=[inner sep=5pt,draw,fill=none,rectangle]
    \tikzstyle{vertex}=[fill,draw,circle,inner sep=1.25pt]

    \def\startpoint{(0,6)}
    \draw \startpoint node[round] {$i=1$};
    \draw \startpoint
          ++(0,-2) node[vertex,label=below:$b_1$] (b1) {} --
          ++(0,1) node[vertex,label=above:$x_1$] (x1) {};
    \draw \startpoint ++(1,-1) node[vertex,label=above:$a_1$] (a1) {};
    \draw [decorate]  (x1) -- (a1);

    \def\startpoint{(4.5,6)}
    \draw \startpoint node[round] {$i=2$};
    \draw \startpoint
          ++(0,-2) node[vertex,label=below:$b_1$] (b1) {} --
          ++(0,1) node[vertex,label=above:$x_1$] (x1) {} --
          ++(1,0) node[vertex,label=above:\parbox{1em}{$a_1$ \\ $x_2$}] (a1) {};
    \draw (a1) -- +(0,-1) node[vertex,label=below:$b_2$] {};
    \draw (a1) +(1,0) node[vertex,label=above:$a_2$] (a2) {};
    \draw [decorate] (a1) -- (a2);

    \def\startpoint{(10,6)}
    \draw \startpoint node[round] {$i=3$};
    \draw \startpoint
          ++(0,-2) node[vertex,label=below:$b_1$] (b1) {} --
          ++(0,1) node[vertex,label=above:$x_1$] (x1) {} --
          ++(1,0) node[vertex,label=above:\parbox{1em}{$a_1$ \\ $x_2$
            \\ $y_3$}] (a1) {};
    \draw (a1) -- +(0,-1) node[vertex,label=below:$b_2$] {};
    \draw (a1) -- +(1,0) node[vertex,label=above:\parbox{1em}{$a_2$\\$x_3$}] (a2) {};
    \draw (a2) -- +(0,-1) node[vertex,label=below:$b_3$] {};
    \draw (a2) -- +(1,0) node[vertex,label=above:$a_3$] (a3) {};
    \draw (a2) +(2,0) node[vertex,label=above:$c_3$] (c3) {};
    \draw (a1) edge [decorate,bend right] (c3);

    \def\startpoint{(0,2.5)}
    \draw \startpoint node[round] {$i=4$};
    \draw \startpoint
          ++(0,-2) node[vertex,label=below:$b_1$] (b1) {} --
          ++(0,1) node[vertex,label=above:$x_1$] (x1) {} --
          ++(1,0) node[vertex,label=above:\parbox{1em}{$a_1$ \\ $x_2$
            \\ $y_3$ \\$y_4$}] (a1) {};
    \draw (a1) -- +(0,-1) node[vertex,label=below:$b_2$] {};
    \draw (a1) -- +(1,0) node[vertex,label=above:\parbox{1em}{$a_2$\\$x_3$}] (a2) {};
    \draw (a2) -- +(0,-1) node[vertex,label=below:$b_3$] {};
    \draw (a2) -- +(1,0) node[vertex,label=above:$a_3$] (a3) {};
    \draw (a2) +(2,0) node[vertex,label=above:\parbox{1em}{$c_3$ \\ $x_4$}] (c3) {};
    \draw (a1) edge [bend right] (c3);
    \draw (c3) -- +(0,-1) node[vertex,label=below:$b_4$] {};
    \draw (c3) -- +(1,0) node[vertex,label=above:$a_4$] (a4) {};
    \draw (a2) +(4,0) node[vertex,label=above:$c_4$] (c4) {};
    \draw (a1) edge [decorate,bend right] (c4);

    \def\startpoint{(7,2.5)}
    \draw \startpoint node[round] {$i=5$};
    \draw \startpoint
          ++(0,-2) node[vertex,label=below:$b_1$] (b1) {} --
          ++(0,1) node[vertex,label=above:$x_1$] (x1) {} --
          ++(1,0) node[vertex,label=above:\parbox{1em}{$a_1$ \\ $x_2$
            \\ $y_3$ \\$y_4$}] (a1) {};
    \draw (a1) -- +(0,-1) node[vertex,label=below:$b_2$] {};
    \draw (a1) -- +(1,0) node[vertex,label=above:\parbox{1em}{$a_2$\\$x_2$}] (a2) {};
    \draw (a2) -- +(0,-1) node[vertex,label=below:$b_3$] {};
    \draw (a2) -- +(1,0) node[vertex,label=above:$a_3$] (a3) {};
    \draw (a2) +(2,0) node[vertex,label=above:\parbox{1em}{$c_3$ \\ $x_4$}] (c3) {};
    \draw (a1) edge [bend right] (c3);
    \draw (c3) -- +(0,-1) node[vertex,label=below:$b_4$] {};
    \draw (c3) -- +(1,0) node[vertex,label=above:$a_4$] (a4) {};
    \draw (a2) +(4,0) node[vertex,label=above:\parbox{1em}{$c_4$\\$x_5$}] (c4) {};
    \draw (a1) edge [bend right] (c4);
    \draw (c4) -- +(0,-1) node[vertex,label=below:$b_5$] {};
    \draw (c4) +(1,0) node[vertex,label=above:$a_5$] (a5) {};
    \draw (a5) edge [decorate] (c4);

  \end{tikzpicture}
  \endpgfgraphicnamed
  \vspace{2em}

  \caption{ \label{fig:preemptive-lowerbound} An example of five steps of the lower
    bound construction. The curved edges denote the edge kept by the online algorithm at each time.
    In the first two steps, the edges $(x_i,a_i)$ are chosen by the algorithm.
    In the third step $(x_3,a_3)$ is not chosen by
    the algorithm, so $(y_3, c_3)$ arrives next. In the fourth
    $(x_4,a_4)$ is not chosen by the algorithm, so $(y_4,c_4)$ arrives
    next. In the fifth step $(x_5, a_5)$ is chosen by the algorithm,
    so no further edges arrive in this step. }

\end{figure}
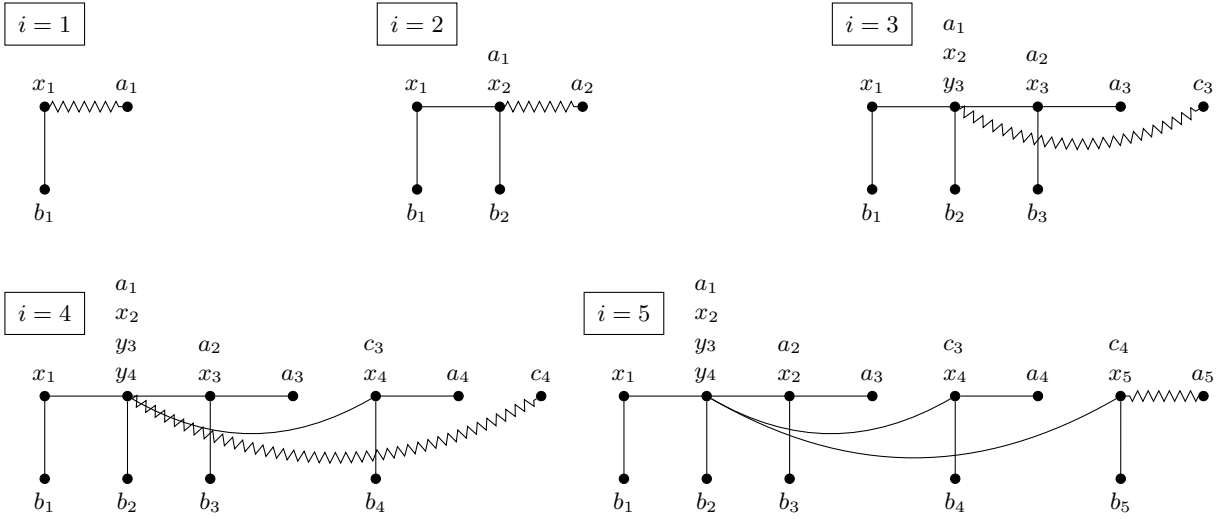

\paragraph{Input construction, properties.}
Every future step can be of two distinct types, which will be described
later on. Among the edges introduced below, vertices called $b_j$
denote endpoints which occur each on a single edge.

After step $i$, the following invariants are maintained. The
algorithm keeps a single edge denoted by $e_i$. If $i=1$, then
$e_i=(a_i,x_i)$. If $i>1$, then this edge can be one of two
edges, $(a_i,x_i)$ or $(c_i,y_i)$. If $e_i=(a_i,x_i)$, then its
weight is $w_i$, and an optimal solution has one edge of each
weight $w_1,w_2,\ldots,w_{i-1},w_i$. No future edges will have
common endpoints with these $i$ edges, except, possibly, with the
endpoint $x_i$ of the edge of weight $w_i$ (the edge of this
weight which this optimal solution keeps is always $(x_i,b_i)$).
 Otherwise, $e_i=(c_i,y_i)$,
and its weight is $w'_i$, in which case an optimal solution can
have edges of weights $w_1,w_2,\ldots,w_i$, except for one weight
$w_j$ for some $j<i$. This index $j$ is used in the definition of
the next step, and the properties of the current step. In
addition to these $i-1$ edges, the optimal solution also has the
edge $(c_i,y_i)$. Future edges will have endpoints which are new
vertices, or in the set $\{c_i,y_i\}$. In the last case, the
vertex $y_i$ is equal to the vertex $x_j$. The invariants clearly
hold after the first step. We next define all other steps and
show that the invariants hold for each option.


\paragraph{Input construction, step $\bs{n}$.}
If $i+1=n$, the last step consists of an edge of weight $w_n$. Let
$x_{n}=a_{n-1}$, if $e_{n-1}=(a_{n-1},x_{n-1})$ and otherwise
$x_{n}=c_{n-1}$. The new edge is $(x_n,b_n)$, where $b_n$ is a
new vertex. This edge has a common endpoint with the edge that
the algorithm has. In fact, the algorithm has an edge of weight
at least $w_{n-1}>w_n$, and thus we assume that it does not
preempt it. If the algorithm has an edge of weight $w_{n-1}$, the
edge $(x_n,b_n)$ does not have $x_i$ as an endpoint, so adding
the new edge to the optimal solution does not require the removal
of any edges, and the profit of the optimal solution is $S_n$. If
the algorithm has an edge of weight $w'_{n-1}$, the new edge is
$(c_{n-1},b_n)$. We replace the edge $(c_{n-1},y_{n-1})$  of the
optimal solution by the new edge. In addition, the edge
$(y_{n-1},b_j)=(x_j,b_j)$ (where $j$ is the index such that the
optimal solution before the modification of the current step does
not have an edge of weight $w_j$) is added to the optimal
solution, since the endpoint $y_{n-1}$ became free, and the
endpoint $b_j$ only has degree 1. The profit of the optimal
solution is $S_n$ again. Recall that $w'_{n-1}\leq w_{n-1}$, and
hence the algorithm earns (in both cases) at most $w_{n-1}$. Note
also that the optimal solution has value of $S_n$ and if $w_n<0$
then we can drop the edge of this weight from the optimal
solution and get a solution of value $S_{n-1}$. Therefore, we
will use $S_{n-1}$ as a lower bound on the value of the optimal
solution in this case. Thus we will show later that
$\frac{S_{n-1}}{w_{n-1}} \geq \C > \C' $.

\paragraph{Input construction, step $\bs{i+1}$, for $\bs{i+1<n}$.}
We next show how to construct the edges of step $i+1$, for the
case $i+1<n$. We introduce two new edges of weight $w_{i+1}$. Let
$x_{i+1}=a_i$, if $e_i=(a_i,x_i)$ and otherwise $x_{i+1}=c_i$.
The new edges are $(x_{i+1},b_{i+1})$, and $(x_{i+1},a_{i+1})$,
where $a_{i+1}$ and $b_{i+1}$ are new vertices. Both these edges
have a common endpoint with the edge that the algorithm has, and
the algorithm can either preempt the edge it has, in which case
we assume (without loss of generality) that it now has
$(x_{i+1},a_{i+1})$, or else it keeps the previous edge. If the
algorithm keeps the previous edge, let $y_{i+1}=x_i$, if
$e_i=(a_i,x_i)$ and otherwise $y_{i+1}=y_i$. In this case a third
edge, $(y_{i+1},c_{i+1})$, which has a weight of $w'_{i+1}$, is
introduced. The vertex $c_{i+1}$ is new.

There are four cases to consider. In the first case, if the
algorithm replaces the edge $(a_i,x_i)$ with the edge
$(x_{i+1},a_{i+1})=(a_i,a_{i+1})$, then an optimal solution can
add the edge $(x_{i+1},b_{i+1})$ to its edges, since the endpoint
$b_{i+1}$ is new, and the endpoint $a_i$ was introduced in the
previous step, in which the optimal solution obtained the edge
$(x_i,b_i)$.

If the algorithm replaces the edge $(c_i,y_i)$ with the edge
$(x_{i+1},a_{i+1})=(c_i,a_{i+1})$, an optimal solution can remove
the edge $(c_i,y_i)$ from its solution and add the two edges
$(x_{i+1},b_{i+1})=(c_i,b_{i+1})$ and $(y_i,b_j)=(x_j,b_j)$
(where $j$ is the index such that the optimal solution before the
modification of the current step does not have an edge of weight
$w_j$). This is possible since the endpoints $b_{i+1}$ and $b_j$
do not have other edges, and the endpoints $c_i$ and $y_i$ become
free.

In the last two cases, the invariants hold.  For the remaining two
cases note that if $w'_i\leq 0$ or $w_i<0$ and the algorithm has a
single edge of weight $w'_i$ or $w_i$, respectively, then the optimal
solution is strictly positive and the value of the algorithm is
non-positive, and hence the resulting approximation ratio in this case
is unbounded.  Hence, we can assume without loss of generality that if
the algorithm has a single edge at the end of step $i$, then its
weight is strictly positive.

If the algorithm does not replace the edge $(a_i,x_i)$ with the
edge $(x_{i+1},a_{i+1})=(a_i,a_{i+1})$, we show that it must
replace it with the edge $(y_{i+1},c_{i+1})=(x_i,c_{i+1})$. Assume
that this is not the case. Then the profit of the algorithm is
$w_i$ and the optimal solution can omit its edge $(x_i,b_i)$ and
add the edges $(x_i,c_{i+1})$ and $(a_i,b_{i+1})$ (since all these
endpoints are introduced in steps $i$ and $i+1$, except
for $x_i$, which becomes free). Thus the profit of the optimal
algorithm is $S_{i-1}+w_{i+1}+w'_{i+1}=\C\cdot w_i$, while the
profit of the online algorithm is $w_{i}$. Thus, the algorithm
must switch to the edge $(x_{i+1},a_{i+1})$, and the structure of
the optimal solution is according to the invariants.

If the algorithm does not replace the edge $(c_i,y_i)$ with the
edge $(x_{i+1},a_{i+1})=(c_i,a_{i+1})$, we show that it must
replace it with the edge $(y_{i+1},c_{i+1})=(y_i,c_{i+1})$. Assume
that this is not the case. Then the profit of the algorithm is
$w'_i$ and the optimal solution can omit its edge $(c_i,y_i)$ and
add the edges $(c_i,a_{i+1})$ and $(y_i,c_{i+1})$ (since $c_i$ and
$y_i$ become free, and the other two  endpoints are introduced in step
$i+1$). Thus the profit of the optimal algorithm is
$S_{i}-w_j+w_{i+1}+w'_{i+1}$, where $j\leq i-1$ and $i \geq 2$,
since $w_j\leq w_{j+1}\leq \cdots \leq w_{i-1}$ as $i-1\leq n-2$, we get that the
optimal profit is at least $S_{i-2}+w_{i}+w_{i+1}+w'_{i+1}=\C
w'_i$, while the profit of the online algorithm is $w'_{i}$. Thus,
the algorithm must switch to the edge $(x_{i+1},a_{i+1})$, and the
structure of the optimal solution is according to the invariants.

\paragraph{Bounding the competitive ratio.} We next define a recursive formula for $S_i$. By the
definition of the sequence $w_i$, we have
\begin{equation}\label{rec}
\left\{
\begin{array}{l}
S_0=0 \\
S_1=1 \\
S_{k+1}=\frac{\C^2+2\C+2}{2\C+1}S_{k}-\frac{\C^2+\C+1}{2\C+1}S_{k-1}, \quad \mbox{for } k \geq 1
\end{array} \right.
\end{equation}
We first use this recurrence to show that if $w_{n-1}<w_{n-2}$ then ${S_{n-1}\over w_{n-1}} \geq \C$.  To see this note that by assumption $S_{n-1}-S_{n-2}< S_{n-2}-S_{n-3}$, hence using the recurrence formula we conclude that
\[ S_{n-1}-2S_{n-2}+{2\C+1 \over -\C^2-\C-1} S_{n-1} +{\C^2+2\C+2\over \C^2+\C+1}S_{n-2}<0, \]
that is,
\[ S_{n-1}\cdot (\C^2+\C+1-2\C-1) +S_{n-2}\cdot (\C^2+2\C+2-2\C^2-2\C-2)<0 , \]
which is equivalent to $(\C^2-\C)S_{n-1}-\C^2S_{n-2}<0$, so $\C(S_{n-1}-S_{n-2})<S_{n-1}$, and we conclude that $\C w_{n-1} < S_{n-1}$, as we argued. Therefore, it remains to show that there is a value of $n$ such that $w_{n-2}>w_{n-1}$. To establish this claim, it suffices to show that there is a value of $j$ for which $w_j<0$ (since $w_1>0$). To prove this last claim, we will show that there is a value of $k$ such that $S_k<0$. Finally, to show the existence of such $k$, we will solve the linear homogeneous recurrence formula, and use the explicit form of $S_k$ to show that there is a value of $k$ such that $S_k<0$.

To solve the recurrence formula (\ref{rec}), we guess solutions of the form $S_k=x^k$ for all $k$, and get the following quadratic equation for $x$:
\[ (2\C+1)x^2-(\C^2+2\C+2)x+(\C^2+\C+1)=0 . \]%
We solve this quadratic equation and get its solutions
\begin{eqnarray*}
x_{1,2}&=&{ (\C^2+2\C+2) \pm \sqrt{(\C^2+2\C+2)^2-4(2\C+1)(\C^2+\C+1)}\over 2(2\C+1)}\\
&=& { (\C^2+2\C+2) \pm \sqrt{\C^4+4\C^2+4+4\C^3+8\C+4\C^2-8\C^3-4\C^2-8\C^2-4\C-8\C-4}\over 2(2\C+1)}\\
&=& { (\C^2+2\C+2) \pm \sqrt{\C(\C^3-4\C^2-4\C-4)}\over 2(2\C+1)}.
\end{eqnarray*}
Note that using $\C < \R$, and recalling that $\R$ is the
unique real solution of the equation $x^3=4(x^2+x+1)$, we conclude that
$\C(\C^3-4\C^2-4\C-4)<0$ and hence the two solutions are complex
numbers whose imaginary parts are not zero.  Since we got two
distinct solutions of $x$, it is known that the recurrence
formula (\ref{rec}) is solved by a formula of the form
$S_i=\alpha x_1^i +\beta x_2^i$ where $\alpha$ and $\beta$ are
constants.  We find the value of $\alpha$ and $\beta$ using the
conditions $S_0=0$ and $S_1=1$.  So we get the following set of
two equations: $\alpha+\beta=0$ (corresponding to $S_0=0$), and
$\alpha x_1+\beta x_2 =1$ (corresponding to $S_1=1$).  From the
first equation we conclude that $\beta=-\alpha$, and using this
we obtain $\alpha={1\over x_1-x_2}={2\C+1 \over
\sqrt{\C(\C^3-4\C^2-4\C-4)}}$.  Hence, the closed form solution
of $S_j$ for values of $\C <\R$ is as follows.
\begin{eqnarray}
S_j&=& {2\C+1 \over \sqrt{\C(\C^3-4\C^2-4\C-4)}} \left( {
(\C^2+2\C+2) + \sqrt{\C(\C^3-4\C^2-4\C-4)}\over 2(2\C+1)}
\right)^j \nonumber\\ && - {2\C+1 \over
\sqrt{\C(\C^3-4\C^2-4\C-4)}} \left( { (\C^2+2\C+2) -
\sqrt{\C(\C^3-4\C^2-4\C-4)}\over 2(2\C+1)} \right)^j \ .
\label{closed}
\end{eqnarray}
We use the notation $i=\sqrt{-1}$, and let $\alpha = A\cdot i$. As
noted above $\C(\C^3-4\C^2-4\C-4)<0$, and hence $A$ is a real
number.  We also define $r$ and $\theta$ such that ${
(\C^2+2\C+2) + \sqrt{\C(\C^3-4\C^2-4\C-4)}\over 2(2\C+1)}=
r(\cos(\theta)+i\sin(\theta))$, and also ${ (\C^2+2\C+2) -
\sqrt{\C(\C^3-4\C^2-4\C-4)}\over 2(2\C+1)}=
r(\cos(\theta)-i\sin(\theta))$, then we get the following formula
for $S_j$.
\begin{eqnarray*}
S_j&=&A\cdot i \cdot \left( r^j (\cos(\theta)+i\sin(\theta))^j-r^j(\cos(\theta)-i\sin(\theta))^j \right)\\
&=&A\cdot i \cdot \left( r^j (\cos(j\theta)+i\sin(j\theta))-r^j(\cos(j\theta)-i\sin(j\theta)) \right)\\
&=& A\cdot i \cdot r^j \cdot 2i\sin(j\theta) \\
&=&-2A r^j \sin(j\theta) \ .
\end{eqnarray*}
Note that $r^j >0$ for all $j$, and hence to show that the sequence $\{ S_j\}$ changes its sign as we required, it suffices to show that the sequence $\{ \sin(j\theta) \}$ changes its sign, but this last claim holds because $0<\theta <\pi$ (as the solutions $x_1$ and $x_2$ are not real numbers).  Hence, the claim follows.

\bibliographystyle{abbrv}


\begin{thebibliography}{10}

\bibitem{BansalBGN07}
N.~Bansal, N.~Buchbinder, A.~Gupta, and J.~Naor.
\newblock An ${O}( \log^2 k )$-competitive algorithm for metric bipartite
  matching.
\newblock In {\em Proceedings of the 15th Annual European Symposium on
  Algorithms}, pages 522--533, 2007.

\bibitem{ElkinZ06}
M.~Elkin and J.~Zhang.
\newblock Efficient algorithms for constructing $(1+\epsilon, \beta)$-spanners
  in the distributed and streaming models.
\newblock {\em Distributed Computing}, 18(5):375--385, 2006.

\bibitem{FeigenbaumKMSZ05}
J.~Feigenbaum, S.~Kannan, A.~McGregor, S.~Suri, and J.~Zhang.
\newblock On graph problems in a semi-streaming model.
\newblock {\em Theoretical Computer Science}, 348(2-3):207--216, 2005.

\bibitem{FeigenbaumKMSZ08}
J.~Feigenbaum, S.~Kannan, A.~McGregor, S.~Suri, and J.~Zhang.
\newblock Graph distances in the data-stream model.
\newblock {\em SIAM Journal on Computing}, 38(5):1709--1727, 2008.

\bibitem{GG+}
J.~A. Garay, I.~S. Gopal, S.~Kutten, Y.~Mansour, and M.~Yung.
\newblock Efficient on-line call control algorithms.
\newblock {\em Journal of Algorithms}, 23(1):180--194, 1997.

\bibitem{KalP93}
B.~Kalyanasundaram and K.~Pruhs.
\newblock Online weighted matching.
\newblock {\em Journal of Algorithms}, 14(3):478--488, 1993.

\bibitem{KVV90}
R.~M. Karp, U.~V. Vazirani, and V.~V. Vazirani.
\newblock An optimal algorithm for on-line bipartite matching.
\newblock In {\em Proceedings of the 22nd Annual ACM Symposium on Theory of
  Computing}, pages 352--358, 1990.

\bibitem{KMV94}
S.~Khuller, S.~G. Mitchell, and V.~V. Vazirani.
\newblock On-line algorithms for weighted bipartite matching and stable
  marriages.
\newblock {\em Theoretical Computer Science}, 127(2):255--267, 1994.

\bibitem{McGregor05}
A.~McGregor.
\newblock Finding graph matchings in data streams.
\newblock In {\em Proceedings of the 8th International Workshop on
  Approximation Algorithms for Combinatorial Optimization Problems}, pages
  170--181, 2005.

\bibitem{Muthukrishnan05}
S.~Muthukrishnan.
\newblock {\em Data Streams: Algorithms and Applications}.
\newblock Foundations and Trends in Theoretical Computer Science. Now
  Publishers Inc, 2005.

\bibitem{Schrijver03}
A.~Schrijver.
\newblock {\em Combinatorial Optimization: Polyhedra and Efficiency}.
\newblock Springer, 2003.

\bibitem{Zelke08}
M.~Zelke.
\newblock Weighted matching in the semi-streaming model.
\newblock In {\em Proceedings of the 25th Annual Symposium on Theoretical
  Aspects of Computer Science}, pages 669--680, 2008.

\end{thebibliography}

\end{document}